\documentclass[letterpaper,12pt]{article}

\usepackage{abstract}
\usepackage{comment}
\usepackage{tikz}
\usepackage{graphicx} 
\usepackage{amsmath}
\usepackage{threeparttable}
\usepackage{authblk}

\usepackage[ruled,vlined, Algorithm]{algorithm}
\usepackage{algpseudocode}
\usepackage{bm}
\usepackage{amsfonts}
\usepackage{amsthm}
\usepackage{natbib}
\usepackage{setspace}
\usepackage{multicol}
\usepackage{xcolor}
\usepackage{graphics}

\newtheorem{theorem}{Theorem}[section]
\newtheorem{corollary}{Corollary}[theorem]

\theoremstyle{definition}
\newtheorem{remark}{Remark}[section]

\newcommand{\blind}{0}

\addtolength{\oddsidemargin}{-.5in}%
\addtolength{\evensidemargin}{-1in}%
\addtolength{\textwidth}{1in}%
\addtolength{\textheight}{1.7in}%
\addtolength{\topmargin}{-1in}%

\begin{document}

\def\spacingset#1{\renewcommand{\baselinestretch}%
{#1}\small\normalsize} \spacingset{1}

\if0\blind
{
  \title{\bf  Stochastic gradient descent-based inference for dynamic network models with attractors}
  \author{
    Hancong Pan\\
    Department of Mathematics and Statistics, Boston University\\
    and \\
    Xiaojing Zhu\\
    Department of Mathematics and Statistics, Boston University\\
    and \\
    Cantay Caliskan \\
    Goergen Institute for Data Science, University of Rochester\\
    and \\
    Dino P. Christenson \\
    Department of Political Science, Washington University in St. Louis\\
    and \\
    Konstantinos Spiliopoulos \\
    Department of Mathematics and Statistics, Boston University\\
    and \\
    Dylan Walker \\
    Argyros School of Business and Economics, Chapman University\\
    and \\
    Eric D. Kolaczyk \\
    Department of Mathematics and Statistics, McGill University\\
    E-mail: eric.kolaczyk@mcgill.ca
  }
  \maketitle
} \fi

\if1\blind
{
  \bigskip
  \bigskip
  \bigskip
  \begin{center}
    {\LARGE\bf Stochastic gradient descent based inference for dynamic network models with attractors}
\end{center}
  \medskip
} \fi
\bigskip
\begin{abstract}
In Coevolving Latent Space Networks with Attractors (CLSNA) models, nodes in a latent space represent social actors, and edges indicate their dynamic interactions. Attractors are added at the latent level to capture the notion of attractive and repulsive forces between nodes, borrowing from dynamical systems theory. However, CLSNA reliance on MCMC estimation makes scaling difficult, and the requirement for nodes to be present throughout the study period limit practical applications.  
We address these issues by (i) introducing a Stochastic gradient descent (SGD) parameter estimation method, (ii) developing a novel approach for uncertainty quantification using SGD, and (iii) extending the model to allow nodes to join and leave over time. Simulation results show that our extensions result in little loss of accuracy compared to MCMC, but can scale to much larger networks. We apply our approach to the longitudinal social networks of members of US Congress on the social media platform X. Accounting for node dynamics overcomes selection bias in the network and uncovers uniquely and increasingly repulsive forces within the Republican Party. Supplemental materials for the article are available online.
\end{abstract}

\noindent%
{\it Keywords:}  Longitudinal social networks; Attractors; Partisan polarization; Dynamic networks analysis; Co-evolving network model
\vfill

\newpage
\spacingset{1.75} 

\date{}
\doublespacing

\newpage

\section{Introduction}
We consider the problem of modeling dynamic networks, a collection of network graphs $G_t$ indexed over times $t$. In particular, we focus on a specific class of temporal network models known as CLSNA models, first developed in \cite{10.1093/jrsssa/qnad008}. In a CLSNA model, the relations and interactions between nodes and certain attributes of the nodes influence each other as they co-evolve over time.

Latent space network models are a commonly used class of models for static networks \citep{hoff2002latent} where the probability of a relation between actors depends on the positions of individuals in an unobserved “social space” or latent space. Several dynamic latent space network models have been proposed in the literature, which  involve embedding nodes of temporal networks into a latent Euclidean space, thus allowing each actor to have a temporal trajectory in the latent space. Earlier approaches modeling the transition and evolution of actors dictate that latent positions
evolve over time in a Markov fashion \citep{sarkar2005dynamic,sewell2015analysis,sewell2015latent,sewell2016latent}. In the CLSNA model \citep{10.1093/jrsssa/qnad008}, temporal evolution of latent positions depends on network connectivity via the presence of attractors (a concept that is fundamental to dynamical systems) at the latent level. The CLSNA model has been shown to be effective at disentangling positive (attractive) and negative (repulsive) forces among
political elites and the public when applied to longitudinal social networks from the social media platforms X and Reddit \citep{10.1093/jrsssa/qnad008}.

The CLSNA model has adeptly captured the nuances of polarization in American politics, offering valuable insights into the past decade. Despite its efficacy, the model's potential could be further realized by advancing the inference methods to address the challenges of scalability. The challenges of scalability arise in two key aspects. The first aspect is an increasing number of nodes. The second is changes to sets of nodes over time.

The first aspect addresses the reliance on MCMC, as used in \cite{10.1093/jrsssa/qnad008}, for estimating model parameters and latent positions. This approach, even though it is quite accurate, becomes prohibitively computationally expensive when scaling to networks with more than a couple hundred nodes.
Various methods have been proposed to reduce computational costs for dynamic latent position network models. \cite{sarkar2005dynamic} introduced a two-stage procedure to optimize the likelihood: multidimensional scaling followed by conjugate gradient update rule. \cite{raftery2012fast} used the case-control likelihood approximation to speed up the estimation algorithm. 
In \cite{liu2021variational}, the authors  propose to use a variational inference algorithm for estimating the dynamic latent position of the network model.

For the second aspect, the original formulation of CLSNA models assumes that all nodes are present at all time points. In the real world of dynamic networks, the sets of nodes tend to change over time. \cite{sewell2015latent} develop a model for dynamic networks in which only a subset of edges are observed and the edges are missing at random. \cite{10.1093/jrsssa/qnad008} handles the issue by pre-processing the data and keeping a subset of congress members who are present during the whole period of time studied.

In this paper, we directly address the above two challenges of scalability. Specifically, we (i) develop a Stochastic Gradient Descent (SGD) parameter estimation method that allows us to scale to larger networks; (ii) accompany that by a novel corresponding approach to variance estimation that builds on the notion of Laplace approximation; and (iii) implement these advances within an extension of the CLSNA framework that allows nodes to join and leave the network. Motivated by Laplace's approximation we replace the true posterior probability distribution by a Gaussian, with the mean at the MAP solution and precision matrix determined by the observed Fisher information. \cite{rue2009approximate} developed the class of INLA algorithms to perform Laplace's approximation when the precision matrix is sparse, for instance in the case of the conditional independence network structure in state-space models and Gaussian Markov Random Fields (GMRFs). However, in the class of CLSNA models that we study here, the assumption of sparsity does not remain applicable, necessitating the novel development we offer here. 

We use the Congressional Hashtag Network from the platform formerly known as Twitter, now referred to as ``X", as detailed in \cite{10.1093/jrsssa/qnad008}. The network was constructed from tweets by US congress-persons from 2010 to 2020. Each year, a binary network was formed, wherein nodes represented sitting members of Congress. Edges between two nodes indicated that their common hashtag usage exceeded the annual average among all pairs of congresspersons. We will revisit the X network to highlight the importance of the two aspects of scalability we propose in this paper: a flexible extended model that accommodates varying sets of nodes, and a fast and scalable SGD-based model inference method. We note that the simplicity and effectiveness of the proposed SGD-based inference and uncertainty quantification developed in this paper is potentially a valuable methodology for generic statistical models. We plan to investigate this direction in future works.

The rest of the paper is organized as follows. Section \ref{S:ExtendedCLSNA} reviews the standard CLSNA model in \cite{10.1093/jrsssa/qnad008} and presents the extended CLSNA model that enables us to accommodate real world dynamic networks with nodes entering and leaving the networks. Section \ref{S:SGDinference} introduces the two-stage SGD-based inference algorithm: the SGD-based point estimation method and SGD-based variance estimation method. Section \ref{S:Simulation } provides results from simulations with factorial designs to assess the capability of the proposed algorithm under different settings.\footnote{Code for this paper can be found at https://github.com/KolaczykResearch/SGD4CLSNA} In Section \ref{S:TwitterDataAnalysis}, we report the results from fitting the extended model to the X Congressional Hashtag Network. 

\section{Extended CLSNA model}\label{S:ExtendedCLSNA}
The CLSNA model of \cite{10.1093/jrsssa/qnad008} assumes that all nodes are present at all time points. In the real world of dynamic networks, the sets of nodes tend to change over time. Take the X Congressional Hashtag Network as an example. Figure \ref{count} shows numbers of re-elected and newly elected Democratic and Republican congress members in the X network. We observe significant number of changes in the sets of nodes present at each time because congress members only serve a fixed-year term before they are considered for reelection and because newly elected or sitting congress members decided to open an X account.

In order to use the original CLSNA model, \cite{10.1093/jrsssa/qnad008} keep a subset of congress members who were in office and on X during the whole period of time. After applying this filtering criteria, the number of nodes was reduced from around 500 to 207.

By developing a model that accommodates nodes entering and leaving the network, we could make inference based on the full data set. As a result, we are able to obtain latent position estimates at each time and estimates of the attractive or repulsive forces between a given node and its neighbors (influencing its movement over time) that are unbiased by factors like time in office and early technology adoption.
\begin{figure}[h]
\centering
\includegraphics[width=0.5\textwidth]{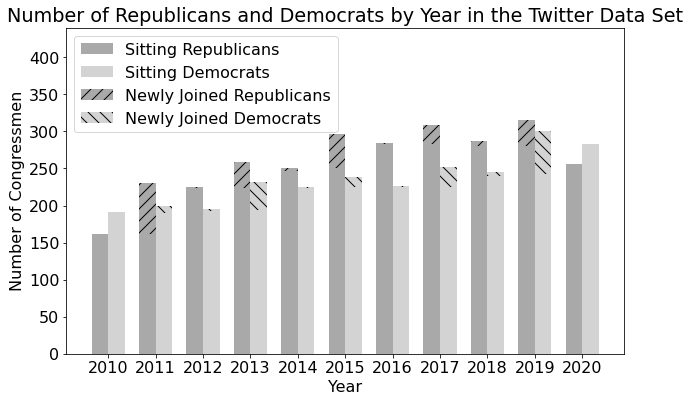}
\caption{Numbers of re-elected and newly elected Democratic and Republican congress members in the X hashtag network}
\label{count}
\end{figure}
\subsection{Model Definition}
Let $G_t=(V_t,E_t)$ be a network evolving in (discrete) time $t$, with vertex set $V_t$ and edge set $E_t$.  We allow for $V_t$ to vary over time.  Let $Y_t$ be the (random) adjacency matrix at time $t$ corresponding to $G_t$. 
Data is in the form of time series of adjacency matrices $\{y_t: t=1,\cdots, T\}$, where $y_{t, ij}=1$ if there is an edge between node $i$ and node $j$ at time $t$ and $y_{t, ij}=0$ otherwise. 

We model $Y_t$ using a latent space approach, with attractors added in the latent level to capture the notions of attractive and repulsive forces. In addition, we want to accommodate nodes entering and leaving the network. Let $ Z_{t,i} \in \mathbb{R}^p$ be a time-indexed latent (i.e., unobserved) position for node $i$ in $p$-dimensional Euclidean space, and let $ Z_t = \{Z_{t,i}\}$.  Assume that each of the $N_t$ nodes falls into one of two groups, i.e.,  Democratic and Republican, with known node label $\pi(i) \in \mathcal{C}$ for node $i$, where $\mathcal{C} = \{1, 2\}$ is the set of group labels. 

Formally, we define our model as follows:

\begin{equation}
     Y_{t,ij} | p_{t,ij} \sim Bernoulli(p_{t,ij})
     \label{eq:bernoulli}
\end{equation}

where at time $t=1$, 
\begin{equation}
    \hbox{logit}(p_{t,ij})  =  \alpha - s( z_{t,i},  z_{t,j}) 
    \label{eq:logit1}
\end{equation}
\begin{equation}
    Z_{t,i}\sim Normal  (0,\tau^2 I_p) \enskip .
\end{equation}

If, at time $t\geq 2$, if node $i$ was absent from the network at time $t-1$,
\begin{equation}
    \hbox{logit}(p_{t,ij})  =  \alpha + \delta Y_{t-1,ij} - s( z_{t,i},  z_{t,j}) 
    \label{eq:logit2}
\end{equation}
\begin{equation}
    Z_{t,i}\sim Normal  (\beta_{t,i} Z_{t_{0},i}+(1-\beta_{t,i})\mu_{t,i},\phi^2 I_p)
    \label{eq:new2}
\end{equation}
\begin{equation}
    \mu_{t,i} = \overline{z}^{\pi(i)}_{t-1} \enskip ,
\end{equation}
 where  $\beta_{t,i}\in[0,1]$ and $Z_{t_{0},i}$ denotes the  potential position  of the specific node $i$ at the most recent past time $t_{0}<t-1$ that it was present at the network. If node i has never been in the network before time $t$, then we simply set  $\beta_{t,i}=0$. See also Remark \ref{R:NodePresentBefore}.

If at time $t\geq 2$, node $i$ was present at the network  at time $t-1$, 
\begin{equation}
    \hbox{logit}(p_{t,ij})  =  \alpha - s( z_{t,i},  z_{t,j}) 
    \label{eq:logit3}
\end{equation}
\begin{equation}
  Z_{t,i} | Z_{t-1,i}=z_{t-1,i} \sim Normal  (\mu_{t,i},\sigma^2 I_p) 
  \label{eq:att}
\end{equation}
\begin{equation}
     \mu_{t,i} = {z}_{t-1, i}+\gamma^{\pi(i)}_{w} A_i^w(z_{t-1}, Y_{t-1})+\gamma_{b}A_i^b(z_{t-1}, Y_{t-1}) \enskip .
     \label{eq:z1}
\end{equation}

Here $s(\cdot, \cdot)$ is the Euclidean distance, $s(z_{t,i}, z_{t, j}) = ||z_{t,i}-z_{t, j}||_2$. $A_i^w$ and $A_i^b$ are the within group $\pi(i)$ (the group node $i$ belongs to) and between groups attractor functions for node $i$ in $Y_{t-1}$ respectively. In addition,  $\gamma^{\pi(i)}_{w}$ and $\gamma_{b}$ are the attraction parameters within group $\pi(i)$ and between groups attraction parameters respectively. The two attractors for node $i$ are defined as follows\footnote{This application employs a two-group CLSNA model with attractors that mimic attractive and repulsive force. It is a specific version of the general CLSNA model class. Each node is assigned to one of two labeled groups, and movements are dictated by attractors shaped by neighboring nodes' influence.}, 
\begin{equation}
A_i^w( z_{t-1},Y_{t-1}) = \bar{ z}^1_{t-1,i}-  z_{t-1,i}, \,
\bar{ z}^1_{t-1,i} = \frac{1}{|S_1^{(i)}|} \sum_{j\in S_1^{(i)}}  z_{t-1,j}
\label{eq:A1}
\end{equation}
\begin{equation}
A_i^b( z_{t-1},Y_{t-1}) = \bar{ z}^2_{t-1,i}-  z_{t-1,i}, \,
\bar{ z}^2_{t-1,i} = \frac{1}{|S_2^{(i)}|} \sum_{j\in S_2^{(i)}}  z_{t-1,j} \enskip ,
\label{eq:A2}
\end{equation}
which are the discrepancies of $ z_{t-1,i}$ from two local averages at time $t-1$. The sets $S_1^{(i)}$ and $S_2^{(i)}$ are defined based on group membership and network connectivity as follows:
\begin{enumerate}
\item 
$ S_1^{(i)} = \{ j \in \mathcal{N} \setminus i \,|\, Y_{ij} = 1, \pi(i) = \pi(j) \}$, neighbors of node $i$ in the same group
 \item 
 $  S_2^{(i)} = \{ j \in \mathcal{N} \setminus i \,|\, Y_{ij} = 1, \pi(i) \neq \pi(j) \}$, neighbors of node $i$ in a different group.
\end{enumerate}

Note that even though it is not explicitly mentioned in the notation, the sets $S_1^{(i)}$ and $S_2^{(i)}$ also depend on time because the connectivity sets are allowed to change with time.
When $S_1^{(i)}$ or $S_2^{(i)}$ or both are empty, we set $A_i^w( z_{t-1},Y_{t-1})=0$, or $A_i^b( z_{t-1},Y_{t-1})=0$ or both to be zero respectively. This implies that if node $i$ is not connected to any members of a specific group, it receives no force from that group.

\begin{remark}\label{R:NodePresentBefore}
    We remark that (\ref{eq:new2})  allows to cover the case where a given node $i$ leaves the network at a given time $t_{0}$ and re-enters again at a future time $t>t_{0}+1$. In that case,   (\ref{eq:new2}) shows that the mean of a new position can be defined to be  a convex combination of the last known position and the average of the position of the rest of the nodes within the group. If no further information exists, taking $\beta_{t,i}=\frac{1}{2}$ is a reasonable choice. Of course, if node $i$ enters the network for the first time at time $t$, then the intuitive choice is to take $\beta_{t,i}=0$.
\end{remark}

Note that we automatically recover the original CLSNA model when  $V_t\equiv V$ is fixed over time. In this case, at time $t\geq 2$, a node $i$ is always present at time $t-1$ and \eqref{eq:logit2} and \eqref{eq:new2} are redundant in that case.

In this proposed extended model, each node lies in a $p$-dimensional Euclidean latent space. The smaller the distance between two nodes in the latent space, the greater their probability of being connected, as in \eqref{eq:bernoulli}, \eqref{eq:logit1}, \eqref{eq:logit2}, and \eqref{eq:logit3}. The expressions in Eqs. (\ref{eq:A1}) and (\ref{eq:A2}) capture the discrepancy between the current latent position of node $i$ and the average of that of its current neighbors in groups $1$ and $2$, respectively. The corresponding parameters $\gamma^w_1, \gamma^w_2,$ and $\gamma^b$ represent attractive and repulsive forces respectively.
CLSNA model allows the network connectivity to enter the temporal evolution of latent positions in the form of attractors. Specifically, in the proposed model the evolution of latent positions for each node $i$ from $t-1$ to $t$ is modeled by the normal transition distribution in Eq. \eqref{eq:z1}, the mean vector of which depends not only on the latent position of itself at time $t-1$, but also on the two local averages, one from its neighbors in the same group, the other from its neighbors in a different group, as captured in \eqref{eq:A1} and \eqref{eq:A2}. Strength of attraction and repulsion toward local averages is summarized by the attractor functions and the associated parameters. The parameter $\delta$ captures edge persistence.  For $\delta>0$, the probability of an edge at time $t$ will be increased when one exists already at time $t-1$.

The extended model is a natural extension of the original model. If there are no nodes entering or leaving the networks, the extended model is identical to the original model of \cite{10.1093/jrsssa/qnad008}. When the sets of nodes change over time, the model classifies nodes into two categories and models the two types of nodes separately:
\begin{enumerate}
    \item For actors who are ``retained'' from time $t-1$ to $t$, the manner in which network connectivity influences the temporal evolution of latent positions remains consistent with the original model: the evolution of latent positions for the node $i$ from $t-1$ to $t$ is modeled by the normal transition distribution, the mean vector of which is a linear combination of the latent position of itself at time $t-1$ and the mean of both its neighbors in the same group and its neighbors in a different group.
    \item For an actor that is present at time $t$ but is absent at time $t-1$, we cannot make an educated guess of its current position at time $t$ because of its absence at time $t-1$. For this reason we choose to model the prior distribution of its latent position as a normal distribution, the mean vector of which is the mean of all members in the same group at $t-1$, which we denote by $\overline{z}^{\pi(i)}_{t-1}$.
\end{enumerate}

We also present a graphical representation of the proposed model in Figure \ref{graphical_rep}. The graphical representation is adapted from \cite{10.1093/jrsssa/qnad008}. This diagram illustrates the key dependencies in the model. The arrow from $Y_{t-1,ij}$ to $Y_{t,ij}$ represents the edge persistence effect described by Eq. \eqref{eq:logit2}, indicating that the presence of an edge at time $t-1$ influences the likelihood of an edge at time $t$. The arrow from $Y_{t-1,ij}$ to $Z_t$ reflects the influence of prior network connectivity on the evolution of latent positions, as captured by Eq. \eqref{eq:att}. The generation of edges based on latent position similarity, as described in Eq. \eqref{eq:bernoulli}, is depicted by the arrows from $Z_t$ to $Y_{t,ij}$. Finally, the arrow from $Z_{t-1}$ to $Z_t$ illustrates the influence of the previous position, i.e., at time $t-1$ on the position at time $t$.
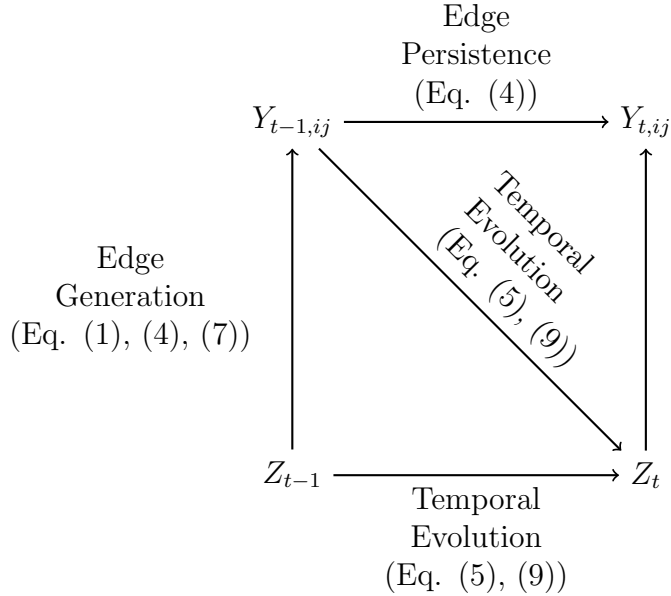
\begin{figure}[H]
\begin{center}
\begin{tikzpicture}[node distance = 4.5cm, thick]

        \node (1) {$Y_{t-1,ij}$};
        \node (2) [right of=1, xshift=0.2cm] {$Y_{t,ij}$};
        \node (3) [below of=1, yshift=-0.2cm] {$Z_{t-1}$};
        \node (4) [below of=2, yshift=-0.2cm] {$Z_{t}$};
        
        \draw[->] (3) -- (1) node[midway, left, text width=4cm, align=center] {Edge\\ Generation\\ (Eq. \eqref{eq:bernoulli}, \eqref{eq:logit2}, \eqref{eq:logit3})};
        \draw[->] (4) -- (2) node[midway, right, text width=2.5cm, align=center] {};
        \draw[->] (3) -- (4) node[pos=0.5, sloped, below, text width=3.5cm, align=center] {Temporal\\ Evolution\\ (Eq. \eqref{eq:new2}, \eqref{eq:z1})};
        \begin{scope}
        \draw[->] (1) -- (2) node[midway, above, text width=2.5cm, align=center] {Edge\\ Persistence\\ (Eq. \eqref{eq:logit2})};
        \end{scope}
        \begin{scope}
        \draw[->] (1) -- (4) node[pos=0.65, sloped, above, text width=3.5cm] {Temporal\\ Evolution\\ (Eq. \eqref{eq:new2}, \eqref{eq:z1})};
        \end{scope}
\end{tikzpicture} 
\end{center}
\vspace{-1em}
\caption[Graphical model representation of the CLSNA model.]{\raggedright Graphical model representation of the CLSNA model. All key dependencies are shown, including edge generation, edge persistence and temporal evolution}
\label{graphical_rep}
\end{figure}

\section{SGD-based Inference Method} \label{S:SGDinference}
We develop a two-stage algorithm to make inferences based on the posterior $\pi(\theta, Z_{1:T}|Y_{1:T})$, where $\theta = (\alpha, \delta, \gamma^{w}, \gamma^{b})$. At the first stage we use SGD to compute a point estimate. At the second stage, we estimate marginal posterior standard deviations by a novel approach to quadratic approximation of the log-posterior density, which is also based on SGD.
\subsection{Point Estimate}\label{SS:PointEstimate}
The posterior distribution of latent positions and parameters is
\begin{align}
    & \pi( Z_{1:T},  \theta \,|\, Y_{1:T}) \label{Eq:PosteriorDistribution}\\
& \propto P(Y_{1:T},  Z_{1:T} \,|\,  \theta) \pi( \theta) \nonumber\\
& \propto 
\left( 
\prod_{t=2}^T P( Z_t \,|\,  Z_{t-1}, Y_{t-1}) P(Y_t \,|\,  Z_t, Y_{t-1}) 
\right)
P(Y_1 \,|\,  Z_1)P( Z_1) \pi( \theta) \nonumber\\
& = 
\left(
\prod_{t=2}^T 
\prod_{i=1}^{N_t} \left( P( Z_{t,i} \,|\,  Z_{t-1}, Y_{t-1}) \prod_{j: j \neq i} P(Y_{t,ij} \,|\,  Z_{t, i},  Z_{t, j}, Y_{t-1, ij}) \right)
\right) \cdot\nonumber\\
& \hspace{4in}
P(Y_1 \,|\,  Z_1)P( Z_1) \pi( \theta). \nonumber
\end{align}

The log posterior distribution (i.e., taking the log of \eqref{Eq:PosteriorDistribution}), therefore, can be written as sums of simpler functions. To optimize the posterior function with SGD, at each step, we randomly sample different terms from these sums. The key steps are:
\begin{enumerate}
\item Initialize $Z_{1:T}, \theta$.
\item Randomly sample indices of different terms from the different summands of the log-posterior distribution function. Compute the randomly sampled summand functions of the log-posterior distribution function.

\item Update $Z_{1:T}, \theta$ based on stochastic gradient descent using the output from step 2.
\end{enumerate}
Repeat steps 2-3 until convergence criterion is met.
In practice, we use gradient descent when the sample size is small and it is computationally feasible. In our implementation, we used gradient descent with momentum (when the sample size was small) or stochastic gradient descent with momentum (when the sample size was large), see  \cite{polyak1964some}.

\begin{algorithm}
\hspace*{\algorithmicindent} \textbf{Input}: Network time-series \(Y_{1:T}\). Current values \( \theta^{(k)}\), \( Z_{1:T}^{(k)}\) for parameters and latent positions. \\
\hspace*{\algorithmicindent} $\lambda$: step size for \(Z_{1:T}\) and for \(\theta\)\\
\hspace*{\algorithmicindent} \textbf{Output}: \(\theta^{(k+1)}, Z_{1:T}^{(k+1)}\)
\begin{algorithmic}[1]
\State Set \( \theta\), \( Z_{1:T}\) to be \( \theta^{(k)}\), \( Z_{1:T}^{(k)}\).
\State Uniformly sample the terms (factors) from the posterior distribution:
\begin{align}
\prod_{t=2}^T
\prod_{i=1}^N P( Z_{t,i} \,|\, Z_{t-1}, Y_{t-1}) \prod_{j: j \neq i} P(Y_{t,ij} \,|\, Z_{t, i}, Z_{t, j}, Y_{t-1, ij})P(Y_1 \,|\, Z_1)P( Z_1) \pi( \theta)
\end{align}
Denote the sampled terms by \(p_i\). Define \(\pi_{SGD}(\theta, Z_{1:T}|Y_{1:T})\) to be \(\prod_i p_i\)
\State Take a SGD step (see also Remark \ref{R:SignGradientStep})
\begin{align}
Z_{1:T}^{(k+1)} = &Z_{1:T}^{(k)}-\lambda \left(\frac{\partial \log \pi_{SGD}(\theta^{(k)}, Z_{1:T}^{(k)}|Y_{1:T})}{\partial Z_{1:T}}\right)\nonumber\\
\theta^{(k+1)} = &\theta^{(k)}-\lambda\left(\frac{\partial \log \pi_{SGD}(\theta^{(k)}, Z_{1:T}^{(k)}|Y_{1:T})}{\partial \theta}\right)\label{eq:sign}
\end{align}
\end{algorithmic}
\caption{\textsc{SGDstep}}
\end{algorithm}

Formally, Algorithm 1 is  the main building block for the SGD-based point estimation method, which we present as pseudo-code in Algorithm 2.

\begin{algorithm}
\hspace*{\algorithmicindent} \textbf{Input}: Network time-series \(Y_{1:T}\). Initial values \(\theta^{init}\), \( Z_{1:T}^{init}\) for parameters and latent positions. \\
\hspace*{\algorithmicindent} M: maximum number of iterations, \(\lambda\): step size for \(Z_{1:T}\) and for \(\theta\)\\
\hspace*{\algorithmicindent} \textbf{Output}: \(\theta^{(k)}, Z_{1:T}^{(k)}\)
\begin{algorithmic}[1]
\State Set initial values \(\theta^{(0)}\), \( Z_{1:T}^{(0)}\) to be \( \theta^{init}\), \( Z_{1:T}^{init}\).
\For{\(k \leftarrow 0\) to \(M\)}
\State Take a SGD step to optimize the posterior distribution: \((\theta^{(k+1)}, Z_{1:T}^{(k+1)})\) = \textsc{SGDstep}(\(\theta^{(k)}, Z_{1:T}^{(k)}\))
\State Stop if stopping criteria is met
\EndFor
\end{algorithmic}
\caption{\textsc{SGD-based Point estimation Method}}
\label{al1}
\end{algorithm}

\begin{remark}
In our work using gradient descent, we encountered a recurring challenge. For the majority of time points, the algorithm performs as expected, closely matching the true underlying values in simulation. However, there were a couple of time points at which an anomalous sign inversion occurs in one dimension of the latent position. In particular,  if $ z_{t,i} = [z_{t,i}^1, z_{t,i}^2, \ldots, z_{t,i}^p]$ was the true value, it was estimated to be $\tilde{z}_{t,i} = [-z_{t,i}^1, z_{t,i}^2, \ldots, z_{t,i}^p]$, even though all the estimates for other times match the truth. 

This problem is rooted in the complex landscape of our problem, where the model sometimes settles in local modes due to its non-convex nature.
The issue with local modes is similar to the issue of label-switching found in the  dynamic stochastic block model literature and the issue may not be solved without an extra assumption e.g. that most of the nodes do not
change group across two different time steps \citep{matias2017statistical}.

The resolution to this issue that we found to work well is to start with a CLSNA model with latent variables living in a higher dimensional space from the one we intend on having; that is, we aim to fit a model with $Z\in\mathbb{R}^{p}$ but we intentionally choose $Z\in\mathbb{R}^{q}$ with $q>p$ . Then, we use the first $p$ principal components of the vector of the latent position variables for $Z_{1:T}$. This method was the most effective of several alternatives attempted to remedy local modes. Intuitively, one can think of this approach as creating a path between modes thus allowing SGD to travel between modes in the higher dimension. In both the simulated and real data examples, we initially set \( p \) to 2 and chose \( q \) as 3. We begin by fitting a model with \( Z \) in \( \mathbb{R}^{3} \), followed by applying principal component analysis (PCA) to the resulting latent positions, thereby compressing \( Z \) into \( \mathbb{R}^{2} \) while maximizing information preservation. This reduced form is then used as the initial value for fitting a model with \( Z \) in \( \mathbb{R}^{2} \). This approach, albeit doubling the cost of fitting the model and performing PCA, provides more accurate and faster convergence towards the true values, as it effectively navigates through potential local optima that could hinder convergence in a single-stage fitting process.
\end{remark}

\begin{remark}\label{R:SignGradientStep}
        Motivated by the work in \cite{hinton2012neural}, at the initial stage of Algorithm 1 in Eq.\eqref{eq:sign} we use the sign of the gradient of the global parameters (+1/-1), i.e., we replace \eqref{eq:sign} by 
        \begin{align*}
        \theta^{(k+1)} = \theta^{(k)}-\lambda\hspace{0.1cm}\text{sign}\left(\frac{\partial \log \pi_{SGD}(\theta^{(k)}, Z_{1:T}^{(k)}|Y_{1:T})}{\partial \theta}\right).
        \end{align*}
        
        This is due to the fact that the magnitude and variance of the related gradient terms is much larger than the magnitude and variance of the gradients of the latent positions. This leads to more stability. As we approach the conclusion of the training phase, we switch to Eq.\eqref{eq:sign} for better accuracy.
 \end{remark} 
 
\subsection{Variance Estimation}\label{SS:VarianceEstimate}
The first stage with SGD above only gives point estimates, without quantifying uncertainty. We propose a novel approach to estimate the posterior variance for parameters of interest. It is a method for approximate Bayesian inference based on Laplace's method. Laplace's approximation, when applied in scenarios like state-space models and Gaussian Markov Random Fields, leverages the sparsity of the precision matrix as detailed by \cite{rue2009approximate}. However, in more general contexts, such as CLSNA, this assumption of sparsity is not always valid. In response to this, our proposed approach builds upon Laplace's foundational principles but adapts to the challenges of non-sparse structures. Specifically, it transforms the variance estimation problem into an optimization problem that can also be solved with SGD.

In this section, we propose a novel variance estimation algorithm.  
Our variance estimation method exploits the properties of the conditional distribution of a multivariate Gaussian distribution. Let \( x \) follow a multivariate normal distribution \( x \sim \mathcal{N}(\mu, \Sigma) \), where the block-wise mean and covariance matrices are defined as \(\mu = \left[\begin{smallmatrix} \mu_1 \\ \mu_2 \end{smallmatrix}\right]\) and \(\Sigma = \left[\begin{smallmatrix} \Sigma_{11} & \Sigma_{12} \\ \Sigma_{21} & \Sigma_{22} \end{smallmatrix}\right]\).
The conditional distribution of a subset vector \( x_1 \), given its complement vector \( x_2 \), is also a multivariate normal distribution \( x_1 | x_2 \sim \mathcal{N}(\mu_{1|2}, \Sigma_{1|2}) \). We will use \(\mu_{1|2}(x_2)\) to denote the conditional mean of \( x_1 \) given \( x_2 \).
With this notation, the variance and covariance can be respectively calculated as:
\begin{align}
    \text{Var}(x_2)&= \frac{1}{2}\frac{(\mu_2-\tilde{x}_2)^2}{\log p(x_1=\mu_1,x_2=\mu_2)-\log p(x_1=\mu_{1|2}(\tilde{x}_2),x_2=\tilde{x}_2)},
    \label{Eq:VarianceFormula1}\\
\Sigma_{12} &=  \frac{\Sigma_{22}}{ (\tilde{x}_2 - \mu_2)}(\mu_{1|2}(\tilde{x}_2) - \mu_1).
    \label{Eq:CovarianceFormula}
\end{align}
A more detailed discussion and proof of these properties can be found in the supplementary file. The results above for a generic normal distribution \( \mathcal{N}(\mu, \Sigma) \) are now adapted to estimate variance in the posterior distribution \( \pi(\cdot|Y) \), effectively applying the same theoretical concepts to the Bayesian analysis under a Laplace approximation.

\subsubsection{The Variance Estimation Algorithm}\label{SS:VarianceEstimationAlgorithm}
Eq.\ref{Eq:VarianceFormula1} and Eq.\ref{Eq:CovarianceFormula}  are the primary components in the second stage of our proposed SGD-based method for variance estimation of a single parameter, as detailed in Algorithm \ref{al2}. The first stage detailed in section \ref{SS:PointEstimate} computes a point estimate using SGD. In the second stage, with this point estimate, we redo the optimization with the constraint of fixing the parameter to be variance-estimated, adjusting it by a certain $\eta$ perturbation size. Thanks to Eq.\ref{Eq:VarianceFormula1} and Eq.\ref{Eq:CovarianceFormula}, we can then estimate the variance/covariance of this parameter by observing the changes in the posterior function and the changes in the other parameters resulting from the perturbation.

Algorithm \ref{al2} details the second stage of the SGD-based approach for estimating the variance of a single parameter (in here we choose $\alpha$).
\begin{algorithm}[H]
\caption{\textsc{SGD-based Var/Cov Estimation Method for a Single Parameter}}
\label{al2}
\hspace*{\algorithmicindent} \textbf{Input}: 
Network time-series \(Y_{1:T}\); mode of \(\log \pi(Z,\theta|Y)\), \(\theta^{*}\) (for all hyperparameters \(\alpha, \delta, \gamma^{w}, \gamma^{b}\)), \(Z_{1:T}^{*}\) (latent positions); \(M\): max iterations; \(\lambda\): step size for \(Z_{1:T}\) and for \(\theta\); \(\eta_{\alpha}\): perturbation size for \(\alpha\).

\hspace*{\algorithmicindent} \textbf{Output}: 
\(\widehat{\text{Var}(\alpha|Y)}\), \(\widehat{\text{Cov}(\alpha, \theta_i|Y)}\), \(\widehat{\text{Cov}(\alpha, Z_{1:T}|Y)}\)

\begin{algorithmic}[1]
    \State Set initial values \(\theta^{(0)}\), \(Z_{1:T}^{(0)}\) to \(\theta^{*}\), \(Z_{1:T}^{*}\).
    \For{\(k \leftarrow 0\) to \(M\)}
        \State Set \(\alpha^{(k)} = \alpha^{*} + \eta_{\alpha}\).
        \State \((\theta^{(k+1)}, Z_{1:T}^{(k+1)}) = \textsc{SGDstep}(\theta^{(k)}, Z_{1:T}^{(k)})\).
        \State Stop if criteria are met; let \(k\) be the corresponding iteration.
    \EndFor
    \State \(\widehat{\text{Var}(\alpha|Y)} = \frac{1}{2}\frac{\eta_{\alpha}^2}{\log \pi(\theta^{(0)},Z^{(0)}_{1:T}|Y_{1:T}) - \log\pi(\theta^{(k+1)},Z^{(k+1)}_{1:T}|Y_{1:T})}\)
    \State \(\widehat{\text{Cov}(\alpha, \theta_i|Y)} = \frac{\widehat{\text{Var}(\alpha|Y)}}{\eta_{\alpha}}(\theta_i^{(k+1)} - \theta_i^{(0)})\)
    \State \(\widehat{\text{Cov}(\alpha, Z_{1:T}|Y)} = \frac{\widehat{\text{Var}(\alpha|Y)}}{\eta_{\alpha}}(Z_{1:T}^{(k+1)} - Z_{1:T}^{(0)})\)
\end{algorithmic}
\end{algorithm}

In Figure \ref{fig:var_estimation} we aim to visualize this approach for a a Bivariate Distribution.
\begin{figure}[htbp]
  \centering
  \begin{tikzpicture}[scale=0.5, transform shape, every node/.style={font=\fontsize{19pt}{22pt}\selectfont}]

  \draw[white] (-5,0) -- (5,0) node[right, white] {$x$};
  \draw[white] (0,0) -- (0,5) node[above, white] {$y$};

  \def\mean{0}
  \def\sigma{1}
  \def\A{2}
  \draw[domain=-2.6:2.6, samples=100] plot (\x, {\A * exp(-(\x-\mean)^2 / (2*\sigma^2))}) node[right] {$\widehat{\pi(\alpha|Y)}$};

  \begin{scope}[shift={(0, 5)}]
    
    \draw[fill=black!10, rotate=45] (0,0) ellipse ({4/sqrt(2)} and {2/sqrt(2)});
    \draw[fill=black!20, rotate=45] (0,0) ellipse ({3/sqrt(2)} and {1.5/sqrt(2)});
    \draw[fill=black!30, rotate=45] (0,0) ellipse ({2/sqrt(2)} and {1/sqrt(2)});
    \draw[fill=black!40, rotate=45] (0,0) ellipse ({1/sqrt(2)} and {0.5/sqrt(2)});
    
    \draw[dashed] (-3,-2.3) -- (3,2.3) node[right] {``The curve corresponding to $(X^*_{\alpha}, \alpha)$'' };
  \end{scope}

  \node at (2.4,4.5) {$\pi(X, \alpha|Y)$};
  \node at (3.5,3.0) {$(\alpha, \pi(X^*, \alpha|Y))$};
\draw[->] (1.5,3.) -- (0.15,2.05);

  \node at (5.6,1.7) {$(\alpha+\eta, \pi(X^*_{\alpha = \alpha + \eta}, \alpha+\eta|Y))$};
    \draw[->] (1.9,1.75) -- (0.65,1.76);

  \draw[dashed] (0.55,5.2) -- (0.55,{\A * exp(-(\mean+0.5)^2 / (2*\sigma^2))});
  \draw[dashed] (-2.2,3) -- (-2.2,{\A * exp(-(\mean-2)^2 / (2*\sigma^2))});
  \draw[dashed] (0,5) -- (0,{\A * exp(-(\mean)^2 / (2*\sigma^2))});

\end{tikzpicture}
  \caption{Illustration of the Variance Estimation Method in a Bivariate Distribution}
  \label{fig:var_estimation}
\end{figure}
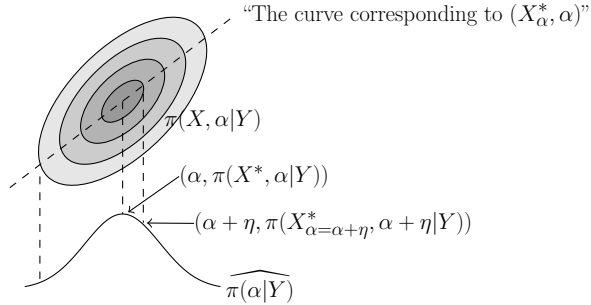

To extend this estimation to all parameters of interest, we repeatedly apply Algorithm \ref{al2} individually to each parameter. Since we are able to recover the covariance matrix, we are also able to estimate variance of functions of parameters. As an illustration, let us look for example into \(\text{Var}(\gamma^w - \gamma^b)\). This estimation is achieved through a sequential application of Algorithm \ref{al2} to each component parameter. Initially, the algorithm is applied to \(\gamma^w\), yielding estimates \(\widehat{\text{Var}(\gamma^w|Y)}\) and \(\widehat{\text{Cov}(\gamma^w, \gamma^b|Y)}\). Following this, a similar application to \(\gamma^b\) provides \(\widehat{\text{Var}(\gamma^b|Y)}\) and an alternative estimate of \(\widehat{\text{Cov}(\gamma^w, \gamma^b|Y)}\). The variance of the composite parameter \(\gamma^w - \gamma^b\) is then approximated using the formula \(\widehat{\text{Var}(\gamma^w|Y)} + \widehat{\text{Var}(\gamma^b|Y)}-2\widehat{\text{Cov}(\gamma^w, \gamma^b|Y)}\). Given that the function is not exactly quadratic, discrepancies between the two covariance estimates may arise. To mitigate this, one can proceed with averaging these estimates for a more robust estimation.
\section{Simulation studies}\label{S:Simulation }
We simulate two networks, one  with $n=100$ actors and one with $n=1000$ actors. For each one of them, we simulate two parameter settings, one for positive attraction or `flocking' ($\gamma^b \geq 0$) and the other for repulsion ($\gamma^b \leq 0$) or `polarization' between two groups over time. For each setting we simulate $20$
data sets with the number of time points $T = 10$.
\begin{table}[H]
    \centering
    \resizebox{\columnwidth}{!}{
\begin{tabular}{ cccccccccc }
&\multicolumn{2}{c}{$\alpha=1$} & \multicolumn{2}{c}{$\delta=2$} & \multicolumn{2}{c}{$\gamma^w=0.25$} & \multicolumn{2}{c}{$\gamma^b=0.5$} \\
\hline 
&$\hat{\alpha}$ & $\widehat{\text{Var}(\alpha)}$ & $\hat{\delta}$ & $\widehat{\text{Var}(\delta)}$ & $\hat{\gamma^w}$ & $\widehat{\text{Var}(\gamma^w)}$ & $\hat{\gamma^b}$ & $\widehat{\text{Var}(\gamma^b)}$ \\
\hline 
n=100&0.817 (0.029) & 0.025 (0.001) & 1.934 (0.023) & 0.024 (0.001) & 0.202 (0.135) & 0.125 (0.026) & 0.489 (0.14) & 0.117 (0.011) \\
\hline
n=1000&0.972 (0.003) & 0.003 ($<$0.001) & 1.992 (0.003) & 0.002 ($<$0.001) & 0.25 (0.028) & 0.03 (0.001) & 0.498 (0.030) & 0.028 (0.001)\\
\hline
&\multicolumn{2}{c}{$\alpha=1$} & \multicolumn{2}{c}{$\delta=3$} & \multicolumn{2}{c}{$\gamma^w=0.45$} & \multicolumn{2}{c}{$\gamma^b=-0.5$} \\
\hline 
&$\hat{\alpha}$ & $\widehat{\text{Var}(\alpha)}$ & $\hat{\delta}$ & $\widehat{\text{Var}(\delta)}$ & $\hat{\gamma^w}$ & $\widehat{\text{Var}(\gamma^w)}$ & $\hat{\gamma^b}$ & $\widehat{\text{Var}(\gamma^b)}$ \\
\hline 
n=100&0.825 (0.043)&0.038(0.001) & 2.868 (0.045) & 0.038 (0.001)  & 0.302 (0.041) & 0.035 (0.003)& -0.54 (0.035) & 0.03 (0.004)\\
\hline
n=1000&0.971 (0.003) & 0.004 ($ <0.001$) & 2.976 (0.003) & 0.004 ( $<0.001$) & 0.433 (0.018) & 0.015 (0.004) & -0.509 (0.017)&0.014(0.004)\\
\hline
\end{tabular}}
    \caption{Posterior-based mean (empirical standard deviation) for point estimation (Algorithm \ref{al1}) and variance estimation (Algorithm \ref{al2}) in flocking and polarization settings, based on $n = 100$ and $n = 1000$ nodes, with $T = 10$ time points.  
    }
    \label{tab:sim}
\end{table}

We want to compare the mean of the point estimates over $20$ simulations with the truth and we compare the mean of the estimated variances with the standard deviation of the point estimates over $20$ simulations.

With the number of nodes $n = 100$, Table \ref{tab:sim} shows that  the point estimates are reasonably accurate
compared with the truth. In the flocking setting, all estimates are slightly biased to be smaller than the truth. In the polarization setting, all estimates except  for repulsion between groups ($\gamma^b$) are slightly biased to be smaller than the truth. The variance estimates are reasonably accurate and are generally slightly smaller than the true standard deviation (See Table \ref{tab:sim}). The empirical standard deviations are obtained by 20 MCMC trials in each parameter setting. (See Table \ref{tab:sim}, where the empirical standard deviations are shown in parenthesis).

Table \ref{tab:sim} shows that  when the number of nodes is $n = 1000$, the point estimates are more accurate than when the number of nodes $n = 100$, both in the flocking setting and the polarization setting, and similarly for the variance estimates.

We notice that the estimation of the standard deviation slightly underestimates the true standard deviation. One reason is that the posterior function can have more than one local minimum. This class of variation estimation methods is able to approximate the contribution of one local minimum to the total variance. Note that multiple local minima contribute to total variance, but in our experiments this did not result into serious issues.

As a comparison of the computational efficiency of the proposed SGD algorithm with MCMC  algorithm, in the X data analysis in the following section, on the reduced data set with 200 nodes, the MCMC took about four hours (CPU) while the proposed SGD-based algorithm only took fewer than 10 minutes to run (GPU). The supplementary file includes a more thorough comparison among MCMC and SGD-based estimation (CPU and GPU).

In the supplementary file, we also include a simulation study to demonstrate the algorithm's robustness in the scenario where nodes are allowed to leave the network over time, i.e., in the case where the vertex set $V_t$ changes dynamically in time.

\section{X Data Analysis}\label{S:TwitterDataAnalysis}
\subsection{Exploratory Analysis}
The X Congressional Hashtag Networks dataset in \cite{10.1093/jrsssa/qnad008} was based on 843,907 tweets from 796 US Congresspersons’ accounts from 2010-2020. Yearly binary networks were built from the tweets. Nodes represent members and edges show whether two members used common hashtags more than the year's average. The nodes in the yearly networks vary as Congress members' X participation changes due to reelections, late joins, or early departures see Figure \ref{count}. Figure \ref{density} shows temporal evolution of edge density within the Democratic and Republican parties individually, alongside the inter-party edge density, as well as the edge density across the entire network. Similar trends were observed in \cite{10.1093/jrsssa/qnad008}: connections increased for the first four years for all types of edges. From 2015 to 2020, edge density within the Democratic party continues to rise, while inter-party edge density decays slowly and edge density within the Republican party drops sharply.

\begin{figure}[h]
\centering
\includegraphics[width=0.5\textwidth]{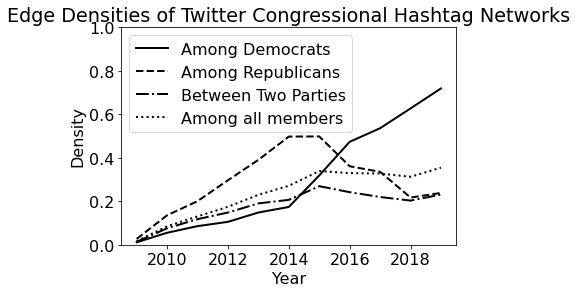}
\caption{Temporal evolution of edge density within the Democratic and
Republican parties, the inter-party edge density and the edge density among all members}
\label{density}
\end{figure}
Figure \ref{count} shows numbers of re-elected and newly elected Democratic congressman and Republican congressman in the X network. We can see an upward trend in the number of X users in both parties. The number of Democratic X users grows more steadily with fewer fluctuation compared with the Republican X users. Notably, Republicans outnumber Democrats in the X network for 9 out of 11 years.
\subsection{Results}

\begin{figure}[h]\label{Fig:PointEstimates}
\centering
\includegraphics[width=0.75\textwidth]{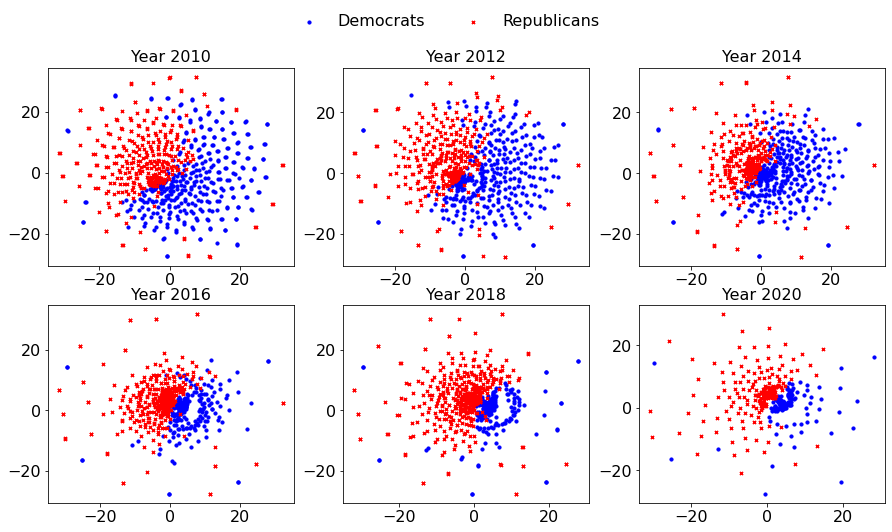}
\caption{ The point estimates of the latent positions with the extended model. The dots that mainly occupy the left portion of each plot represent Democrats and the dots that mainly occupy the right portion of each plot represent Republicans.
 At each time point, the two parties consistently occupy different halves of the space. The Democrats exhibited clustering behavior throughout the period. Conversely, the Republicans initially displayed a similar flocking behavior, but gradually began to disperse around the year 2016.}
\label{positionplot}
\end{figure}

\begin{table}
\centering
\begin{tabular}{*{6}{c}}
 & $\hat \alpha$ & $\hat \delta$ & $\hat \gamma_1^w$ & $\hat \gamma_2^w$ & $\hat \gamma^b$\\\hline
Mean & 3.2 & 1.08 & 0.34 & -0.11 & -0.22 \\
SD & 0.011 & 0.010 & 0.017 & 0.013 & 0.0085 \\
\hline
\end{tabular}
\caption{ 1-Democrats, 2-Republicans.The point estimates and the posterior standard deviation of the global parameters in the extended model fitted to the full X data set spanning from 2010 to 2020.  The between-group attraction is -0.22, indicating that there is polarization between the two parties. The within group coefficient is 0.34 for Democrats, and -0.11 for Republicans, indicating that the Democrats were flocking, while the Republicans were polarizing within their own group.}
\label{tab_para_twitter}
\end{table}
We fit our model with time-invariant parameters to the full X data set from year 2010 to 2020. The AUC values computed at each year are all above 0.974, and the overall AUC value computed across all times is 0.985, providing evidence that our model fits the data very well. The point estimates and the posterior standard deviation of model parameters are provided in Table \ref{tab_para_twitter}. Point estimates were produced using the SGD methodology of Section \ref{SS:PointEstimate} and standard deviation estimates were produced using the methodology of Section \ref{SS:VarianceEstimate}. In the supplementary file, we present two diagnostic tests demonstrating that the underlying assumptions of the variance estimation method proposed in Section \ref{SS:VarianceEstimate} are largely accurate for this dataset, i.e., the diagnostic tests do not show evidence of violating the underlying assumptions of the variance estimation method.

The point estimates of persistence parameter is 1.08, implying that there is a higher likelihood for it to also appear at time $t$ if an edge is present at time $t-1$. The between-group attraction coefficient $\gamma^{b}$ is -0.22, indicating that there is polarization between the two parties. The within group coefficients $\gamma^{w}$ are 0.34 for Democrats, and -0.11 for Republicans, indicating that the Democrats were flocking, while the Republicans were polarizing within their own group. Importantly, we note that in the original model in \cite{10.1093/jrsssa/qnad008}, only members that are present at all times are kept. The parameter estimation for attractions within group indicated that while both parties have moved away from one another, they generally flocked to their own. By incorporating all members into the extended model, we effectively eliminate the potential for selection bias, ensuring a more accurate and comprehensive analysis of the data. As a result of this approach, we have uncovered that the force within the Republican group is negative, in contrast to \cite{10.1093/jrsssa/qnad008}.

Figure \ref{positionplot} shows the point estimates of latent positions for each member of Congress in the X hashtag networks. At each time, two parties consistently occupy different halves of the space. The Democrats were flocking during the whole time. Conversely, the Republicans initially displayed a similar flocking behavior, but gradually began to disperse after year 2017.

\begin{figure}[h]
\centering
\includegraphics[width=0.35\textwidth]{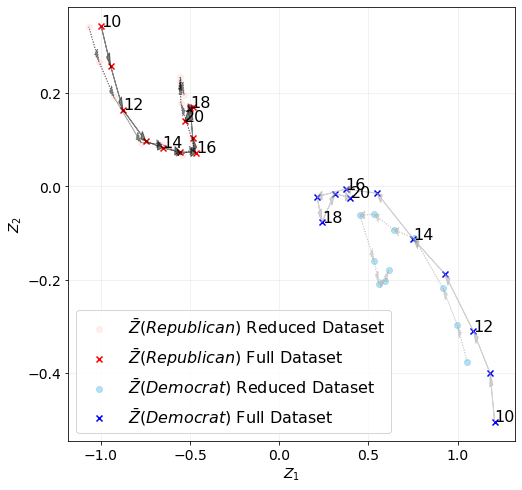}
\caption{The trajectory of the mean of the latent positions of the members in each party. The average position of the two parties gradually and consistently approached each other from 2011 to 2016 and after 2016, both of them took a U-turn and moved apart.}
\label{twomeans}
\end{figure}

Figure \ref{twomeans} shows the trajectory of the mean of the latent positions of the members in each party. The lighter color represents results obtained from the reduced data set used in \cite{10.1093/jrsssa/qnad008}'s original analysis. While the darker color represents results obtained with our model and inference techniques
from the full data set. In both cases, the average position of the two parties gradually and consistently approached each other from 2011 to 2016. But after 2016, both of them took a U-turn and moved apart. However, we see that the analysis based on the full data set suggests that the Democrats were in fact noticeably closer to the Republicans just before 2016 than suggested by the analysis based on the reduced data.

\begin{figure}[h]
\centering
\includegraphics[width=0.5\textwidth]{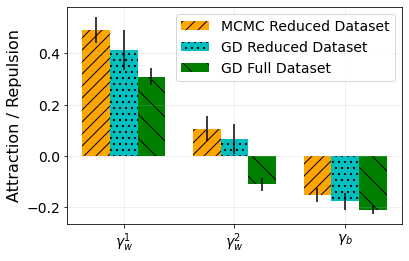}
\caption{Comparison of attraction parameters from MCMC method on the reduced data set and the SGD-based method on the reduced and full data set. The MCMC method and the SGD-based method yield very similar point estimates and variance estimates for the attraction parameters on the reduced data set. On the full data set, the model concludes that there is a large attractive force within the Democratic Party. By contrast, we find that within the Republican Party the force is repulsive.}
\label{compare_attraction}
\end{figure}
Figure \ref{compare_attraction} compares attraction parameters with that found in \cite{10.1093/jrsssa/qnad008}. The MCMC and the SGD-based method yield close point estimates and variance estimates for the attraction parameters on the reduced data set. The point estimates obtained from the MCMC method fall within the 95\% confidence interval of the SGD-based method. On the full data set, the model concludes that there is a large attractive force within the Democratic Party. By contrast, we find that within the Republican Party the force is repulsive.
\begin{figure}[h]
\centering
\includegraphics[width=0.5\textwidth]{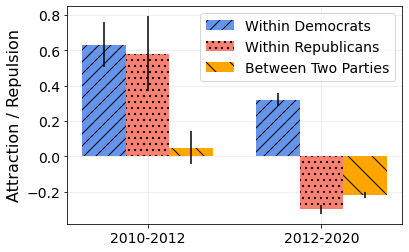}
\caption{Evolution of posterior means and 95\% CI for
Democrats and Republicans, and between-group attraction/repulsion in X Congressional Hashtag Networks, the values of which at each time period in the horizontal axis are
obtained by fitting the model using separately parameterized networks in the corresponding
time periods.}
\label{varying}
\end{figure}
To analyze the changes in attraction, repulsion,
we fit a series of models that allow a change-point to vary from 2011 to 2019. The resulting fitted models with different change-points were compared based on Bayesian Information Criterion (BIC). The one with the lowest BIC value was selected. We identified 2012 as the year of change-point (See supplementary file for more details on this). Figure \ref{varying} illustrates the evolution of within-group attraction and repulsion for Democrats and Republicans, and between-group attraction and repulsion.  The within-Republican coefficient is positive in the first time period and is negative in the second time periods from 2012 to 2020.  This suggests polarization within Republican members started to rise within the period 2012-2020. The within-Democrat coefficient is positive in both time periods, although its magnitude decreases a bit in the second time period from 2012 to 2020. The between-group coefficient (orange bars) is not statistically significant at the 95\% confidence level from 2010 to 2012 and is negative in the second time periods from 2012 to 2020.

The observed patterns of increasing polarization between the Democratic and Republican parties align with findings from previous studies. For instance, \cite{andris2015rise} documented similar trends through the analysis of co-sponsorship patterns in Congress, while \cite{garimella2017long} identified heightened polarization in the retweet networks of U.S. politicians on Twitter. These studies, along with our findings, highlight political polarization in different environments in U.S..

\section{Discussion}
The extended CLSNA model proposed in this paper together with the SGD-based inference method, successfully addressed the challenge of scalability to large networks with nodes entering and leaving the network. In particular, (a) the introduction of the SGD parameter estimation method, (b) the development of the novel variance estimation approach for approximate Bayesian inference which transforms variance estimation into an optimization problem solvable with SGD, and, (c) the extension of the model to allow dynamic node participation, have significantly improved the CLSNA model's utility for analyzing dynamic networks. When applied to longitudinal social networks on X, the model mitigates selection bias and reveals previously concealed negative force within the Republican party.

The SGD parameter estimation method and the variance estimation method are general approaches to approximate Bayesian inference. Therefore, they can be combined with other strategies to make inference faster. For example, they can be used with case-control log-likelihood approximation as in \cite{raftery2012fast}. This can be achieved by using stratified sampling to draw the edge probability terms during the stochastic sampling process.

With the increased efficiency of the proposed method, extending the model to accommodate more than two groups with attractors becomes more feasible. This extension is particularly useful in contexts such as multi-party political systems in Germany and Canada. For a small number of groups, the computational impact is expected to be minimal. Moreover, our library has been developed with this flexibility in mind and allows for such an extension.

In future work we plan to incorporate observed explanatory variable terms into the edge probability. For instance, a linear predictor based on explanatory variables can be integrated into the edge probability function, see \cite{hoff2007modeling}. In this way, we can more effectively separate the latent effect from the fixed effect and accurately capture the repulsive and attractive force among the different actors.

\section*{Acknowledgement}
This research was supported by US NSF-DMS 2311500, NSF SES-2120115, Canadian NSERC RGPIN-2023-03566 and NSERC DGDND-2023-03566.

\section*{Supplementary Materials}
\begin{description}
    \item[Appendix:] Contains proofs of the theoretical results, additional statistical tests and diagnostics to support the numerical results, implementation details of the algorithm, performance/efficiency studies of the algorithm, and a simulation study demonstrating the algorithm's robustness against node turnover. (supp.pdf, PDF)
    \item[Source Code:] The source code is available in the GitHub repository. The current library implementation is specifically designed for a 2-party system. 
    
(https://github.com/KolaczykResearch/SGD4CLSNA, GitHub repository)
\end{description}

\bibliographystyle{rss}
\bibliography{references}

\newpage

\begin{center}
\def\spacingset#1{\renewcommand{\baselinestretch}%
{#1}\small\normalsize} \spacingset{1}

\if0\blind
{
  \title{\bf  Supplementary  Information for `Stochastic gradient descent-based inference for dynamic network models with attractors'}
  \author{\\
    Hancong Pan\\
    Department of Mathematics and Statistics, Boston University\\
    and \\
    Xiaojing Zhu\\
    Department of Mathematics and Statistics, Boston University\\
    and \\
    Cantay Caliskan \\
    Goergen Institute for Data Science, University of Rochester\\
    and \\
    Dino P. Christenson \\
    Department of Political Science, Washington University in St. Louis\\
    and \\
    Konstantinos Spiliopoulos \\
    Department of Mathematics and Statistics, Boston University\\
    and \\
    Dylan Walker \\
    Argyros School of Business and Economics, Chapman University\\
    and \\
    Eric D. Kolaczyk \\
    Department of Mathematics and Statistics, McGill University\\
    E-mail: eric.kolaczyk@mcgill.ca
  }
  \maketitle
} \fi

 \end{center}




\spacingset{1.75} 

\doublespacing

This supplementary document is organized as follows. In Section \ref{S:Gaussian} we present the proofs of the theoretical results that appear in the main body of the paper. In Section \ref{S:StatisticalAnalysis} we present some additional statistical tests and diagnostics to back up the validity of the numerical results reported in the main body of the paper. In Section \ref{S:ImplementationDetails}, we provide implementation details for the algorithm. In Section \ref{S:AlgorithmComparison}, we compare computational times among MCMC and the newly proposed SGD-based algorithm (in both CPU and GPU settings). We also comment upon the impact of hyper-parameters choices on convergence times. In Section \ref{S:VaryingPercentageLeaving} we present a simulation study to illustrate the algorithm's robustness when nodes exit the network at different turnaround levels.

\section{Some Useful Results of Multivariate Normal Distribution}
\label{S:Gaussian}

Our variance estimation method exploits the properties of the conditional distribution of a multivariate Gaussian distribution. Let \( x \) follow a multivariate normal distribution \( x \sim \mathcal{N}(\mu, \Sigma) \), and hence the conditional distribution of a subset vector \( x_1 \), given its complement vector \( x_2 \), is also a multivariate normal distribution \( x_1 | x_2 \sim \mathcal{N}(\mu_{1|2}, \Sigma_{1|2}) \) with the conditional mean and covariance given by \(\mu_{1|2} = \mu_1 + \Sigma_{12} \Sigma_{22}^{-1} (x_2 - \mu_2)\) and \(\Sigma_{1|2} = \Sigma_{11} - \Sigma_{12} \Sigma_{22}^{-1} \Sigma_{21}\), respectively. The block-wise mean and covariance matrices are defined as \(\mu = \left[\begin{smallmatrix} \mu_1 \\ \mu_2 \end{smallmatrix}\right]\) and \(\Sigma = \left[\begin{smallmatrix} \Sigma_{11} & \Sigma_{12} \\ \Sigma_{21} & \Sigma_{22} \end{smallmatrix}\right]\).

More specifically, the covariance of the conditional distribution \(\text{Cov}(x_1|x_2) \) is a constant as a function of \( x_2 \), indicating that the shape of the conditional distribution is independent of the specific value of \( x_2 \). Additionally, writing \( p(x_1|x_2) \) as the p.d.f of the conditional distribution of \( x_1|x_2 \), the mean \( \mu_{1|2} = \mu_1 + \Sigma_{12} \Sigma_{22}^{-1} (x_2 - \mu_2) \) is the median, and mode of the distribution. The maximum of the p.d.f. function \( p(x_1=\mu_{1|2}|x_2) \) is a constant as a function of \( x_2 \).

Therefore, we can recover the shape of the marginal distribution of $x_2$ from the joint distribution by varying $x_2$ and following the line $x_1=\mu_{1|2}$.

\begin{theorem}
\label{main_theo}
    Given the multivariate Gaussian distribution setting presented above, write $p(x_1, x_2)$ as the p.d.f of the joint distribution of $x_1, x_2$. $p(x_2)$ as the p.d.f of the marginal distribution of $x_2$. Then, we have $p(x_2) \propto p(x_1=\mu_{1|2},x_2)$.
\end{theorem}

In particular,  the marginal distribution is proportional to the joint distribution evaluated on the curve of the conditional mean of $x_1$ given $x_2$.
\begin{proof}
Let us explicitly state the conditional mean and covariance matrix for our given multivariate Gaussian distribution. The conditional mean of \(x_1\) given \(x_2\), denoted \( \mu_{1|2} \), is:
\[ \mu_{1|2} = \mu_1 + \Sigma_{12} \Sigma_{22}^{-1} (x_2 - \mu_2). \]

Additionally, the conditional covariance matrix, denoted \( \Sigma_{1|2} \), is described as:
\[ \Sigma_{1|2} = \Sigma_{11} - \Sigma_{12} \Sigma_{22}^{-1} \Sigma_{21}. \]

The joint distribution, \( p(x_1, x_2) \), for the multivariate Gaussian is:
\[ p(x_1, x_2) = \frac{1}{(2\pi)^{\frac{k}{2}}|\Sigma|^{\frac{1}{2}}} \exp\left(-\frac{1}{2}(x-\mu)^T \Sigma^{-1} (x-\mu)\right). \]

The conditional distributions \( p(x_1 | x_2) \) is:
\[ p(x_1 | x_2) = \frac{1}{(2\pi)^{\frac{n}{2}}|\Sigma_{1|2}|^{\frac{1}{2}}} \exp\left(-\frac{1}{2}(x_1-\mu_{1|2})^T \Sigma_{1|2}^{-1} (x_1-\mu_{1|2})\right). \]

By the definition of conditional probabilities, we have:
\[ p(x_2) = \frac{p(x_1, x_2)}{p(x_1 | x_2)}. \]

Inserting \( x_1 = \mu_{1|2} \) into this relation, we get:
\[ p(x_2) = \frac{p(\mu_{1|2}, x_2)}{p(\mu_{1|2} | x_2)}. \]

Since $p(\mu_{1|2} | x_2)=\frac{1}{(2\pi)^{\frac{n}{2}}|\Sigma_{1|2}|^{\frac{1}{2}}}$ and  \( \Sigma_{1|2} \) is constant as a function of \( x_2 \), we get that
\[ p(x_2) = C p(\mu_{1|2}, x_2). \]
where \(C = (2\pi)^{\frac{n}{2}}| \Sigma_{11} - \Sigma_{12} \Sigma_{22}^{-1} \Sigma_{21} |^{\frac{1}{2}}\), a constant as a function of \(x_2\).
\end{proof}

Theorem \ref{main_theo} provides an expression for the shape of the marginal distribution from the joint distribution, when the joint distribution is a multivariate normal distribution.

Next, we proceed with an important result demonstrating that for a normal distribution the ideas of Theorem \ref{main_theo} yield a formula for the variance of $x_2$ that we can then be turned into a practical SGD-based algorithm for uncertainty quantification. In particular, recall that in a univariate normal distribution, two key parameters - the mean and variance - describe the distribution fully. By considering two distinct points from the distribution, we can deduce these parameters. 

\begin{corollary}
\label{maincol}
   Consider the multivariate Gaussian distribution setting presented above, and let $x_2$ be scalar. Then, for some fixed $\tilde{x}_2\neq \mu_2$, 
\begin{equation}
    \text{Var}(x_2)= \frac{1}{2}\frac{(\mu_2-\tilde{x}_2)^2}{\log p(x_1=\mu_1,x_2=\mu_2)-\log p(x_1=\mu_{1|2},x_2=\tilde{x}_2)}
    \label{Eq:VarianceFormula1}
\end{equation}
\end{corollary}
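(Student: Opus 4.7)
The plan is to derive the formula by combining Theorem \ref{main_theo} with the explicit form of a univariate normal density, since the marginal of $x_2$ (being scalar) is a univariate Gaussian $\mathcal{N}(\mu_2, \mathrm{Var}(x_2))$.

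First, I will invoke Theorem \ref{main_theo} to write $p(x_2) = c \cdot p(x_1 = \mu_{1|2}, x_2)$ for some normalizing constant $c > 0$ that does not depend on $x_2$. Taking logarithms and differencing the values at two distinct points $x_2 = \mu_2$ and $x_2 = \tilde{x}_2$, the constant $c$ cancels, giving
\begin{equation*}
\log p(x_2 = \mu_2) - \log p(x_2 = \tilde{x}_2) = \log p(x_1 = \mu_1, x_2 = \mu_2) - \log p(x_1 = \mu_{1|2}, x_2 = \tilde{x}_2),
\end{equation*}
where on the right-hand side I have used the fact that evaluating the conditional mean formula $\mu_{1|2} = \mu_1 + \Sigma_{12}\Sigma_{22}^{-1}(x_2 - \mu_2)$ at $x_2 = \mu_2$ yields $\mu_{1|2} = \mu_1$.

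Second, I will use the univariate Gaussian form of the marginal. Writing $\sigma_2^2 := \mathrm{Var}(x_2)$, we have
\begin{equation*}
\log p(x_2) = -\tfrac{1}{2}\log(2\pi \sigma_2^2) - \frac{(x_2 - \mu_2)^2}{2\sigma_2^2},
\end{equation*}
so that the difference $\log p(\mu_2) - \log p(\tilde{x}_2)$ simplifies cleanly to $(\tilde{x}_2 - \mu_2)^2/(2\sigma_2^2)$. Equating this with the right-hand side from the previous step and solving for $\sigma_2^2$ gives exactly \eqref{Eq:VarianceFormula1}. One needs the denominator to be nonzero, which holds whenever $\tilde{x}_2 \neq \mu_2$ and $\mathrm{Var}(x_2) < \infty$, so the hypothesis $\tilde{x}_2 \neq \mu_2$ is the only nontriviality.

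There is no real obstacle here; the argument is essentially a one-line consequence of Theorem \ref{main_theo} once one notices that following the ridge $x_1 = \mu_{1|2}$ traces out the marginal density of $x_2$ up to a multiplicative constant, and that for a univariate normal the variance is recovered from the log-density at any two points. The only item requiring minor care is the reduction $\mu_{1|2}\big|_{x_2 = \mu_2} = \mu_1$, which is immediate from the conditional mean formula.
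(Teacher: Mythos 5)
Your proposal is correct and follows essentially the same route as the paper's own proof: invoke Theorem \ref{main_theo} to identify $\log p(x_2)$ with $\log p(x_1=\mu_{1|2},x_2)$ up to an additive constant, evaluate the univariate Gaussian log-density at the two points $x_2=\mu_2$ and $x_2=\tilde{x}_2$, and solve the resulting difference for $\mathrm{Var}(x_2)$. Your explicit remark that $\mu_{1|2}$ reduces to $\mu_1$ at $x_2=\mu_2$ is a small clarification the paper leaves implicit, but it does not change the argument.
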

\begin{proof}
Recall that the p.d.f of a univariate Gaussian distribution \( x_2 \sim \mathcal{N}(\mu_2, \text{Var}(x_2)) \) is given by:
\begin{equation}
p(x_2) = \frac{1}{\sqrt{2\pi \text{Var}(x_2)}} \exp\left(-\frac{(x_2-\mu_2)^2}{2 \text{Var}(x_2)}\right).
\end{equation}
Taking the logarithm, we obtain:
\begin{equation}
\log p(x_2) = -\frac{1}{2} \log(2\pi \text{Var}(x_2)) - \frac{(x_2-\mu_2)^2}{2 \text{Var}(x_2)}.
\end{equation}

From Theorem \ref{main_theo}, we deduce that \( \log p(x_2) \) and \( \log p(x_1=\mu_{1|2},x_2) \) differ by a constant since \( p(x_2) \propto p(x_1=\mu_{1|2},x_2) \). Considering two distinct points, \( x_2 = \mu_2 \) and \( x_2 = \tilde{x}_2 \):

For \( x_2 = \mu_2 \):
\begin{equation}
\log p(x_2=\mu_2) = -\frac{1}{2} \log(2\pi \text{Var}(x_2)).
\end{equation}

For \( x_2 = \tilde{x}_2 \):
\begin{equation}
\log p(x_2=\tilde{x}_2) = -\frac{1}{2} \log(2\pi \text{Var}(x_2)) - \frac{(\tilde{x}_2-\mu_2)^2}{2 \text{Var}(x_2)}.
\end{equation}

Define the difference in these logarithmic probabilities as:
\begin{equation}
\Delta = \log p(x_1=\mu_1,x_2=\mu_2) - \log p(x_1=\mu_{1|2},x_2=\tilde{x}_2).
\end{equation}

Using the above, we get:
\begin{equation}
\Delta = \frac{(\mu_2-\tilde{x}_2)^2}{2 \text{Var}(x_2)}.
\end{equation}

Rearranging gives:
\begin{equation}
\text{Var}(x_2) = \frac{(\mu_2-\tilde{x}_2)^2}{2(\log p(x_1=\mu_1,x_2=\mu_2) - \log p(x_1=\mu_{1|2},x_2=\tilde{x}_2))},
\end{equation}
completing the proof of the corollary.
\end{proof}
Beyond just the variance represented by the diagonal elements of the covariance matrix, there's also a formula for the off-diagonal elements. This allows us to reconstruct a single column of the covariance matrix.
\begin{corollary}
\label{secondcol}
 Consider the multivariate Gaussian distribution setting presented above. Then, for some fixed $\tilde{x}_2\neq \mu_2$,
    \[ \Sigma_{12} =  \frac{\Sigma_{22} }{ (\tilde{x}_2 - \mu_2)}(\mu_{1|2} -\mu_1), \]
    where \( \mu_{1|2} \) denotes the conditional mean of \(x_1\) given \(x_2 = \tilde{x}_2\).
\end{corollary}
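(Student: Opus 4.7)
The plan is to observe that this corollary is essentially a direct algebraic rearrangement of the conditional mean formula that was already recorded in the paragraph preceding Theorem \ref{main_theo}. Recall the stated identity for a jointly Gaussian vector,
\begin{equation*}
\mu_{1|2} = \mu_1 + \Sigma_{12}\Sigma_{22}^{-1}(x_2 - \mu_2).
\end{equation*}
This identity holds for every value of the conditioning variable, so we are free to evaluate it at the specific point $x_2 = \tilde{x}_2$.

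First, I would substitute $x_2 = \tilde{x}_2$ into the conditional mean formula to obtain
\begin{equation*}
\mu_{1|2} - \mu_1 = \Sigma_{12}\Sigma_{22}^{-1}(\tilde{x}_2 - \mu_2).
\end{equation*}
Next, because the corollary hypothesis asserts that $x_2$ is scalar, $\Sigma_{22}$ is a positive scalar and $\tilde{x}_2 - \mu_2$ is a nonzero scalar (by the assumption $\tilde{x}_2 \neq \mu_2$). Hence the factor $\Sigma_{22}^{-1}(\tilde{x}_2 - \mu_2)$ is an ordinary nonzero real number that can be divided across, yielding
\begin{equation*}
\Sigma_{12} = \frac{\Sigma_{22}}{\tilde{x}_2 - \mu_2}\bigl(\mu_{1|2} - \mu_1\bigr),
\end{equation*}
which is the claim.

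There is no substantive obstacle here; the only point that requires a sentence of comment is the use of the scalar hypothesis on $x_2$, which is what permits division by $\tilde{x}_2 - \mu_2$ and makes $\Sigma_{22}^{-1}$ commute with the vector $\mu_{1|2} - \mu_1$. If one wished to state an analogous result for vector-valued $x_2$, one would instead recover the identity $\Sigma_{12} = \Sigma_{22}(\tilde{x}_2 - \mu_2)^{-1}(\mu_{1|2} - \mu_1)$ only in a generalized-inverse sense, but that extension is outside the scope of the corollary as stated. Together with Corollary \ref{maincol} (which gives $\Sigma_{22} = \operatorname{Var}(x_2)$ from two function evaluations of the joint log-density), this result shows that the entire column $\Sigma_{12}$ of the covariance matrix can be reconstructed from knowledge of $\mu$ and of the conditional mean $\mu_{1|2}$ evaluated at a single perturbed point, which is what will be exploited in the SGD-based uncertainty quantification of Subsection \ref{SS:VarianceEstimationAlgorithm}.
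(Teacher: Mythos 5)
Your proof is correct: evaluating the conditional-mean identity $\mu_{1|2} = \mu_1 + \Sigma_{12}\Sigma_{22}^{-1}(x_2-\mu_2)$ at $x_2=\tilde{x}_2$ and dividing by the nonzero scalar $\Sigma_{22}^{-1}(\tilde{x}_2-\mu_2)$ is exactly the intended argument, and your remark that the scalar nature of $x_2$ (implicit in the corollary's notation, explicit in Corollary \ref{maincol}) is what licenses the division is the right point to flag. Note that the paper itself gives no proof of this corollary --- the supplement proves only Theorem \ref{main_theo} and Corollary \ref{maincol} --- so your one-line rearrangement supplies precisely what the authors treated as immediate.
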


The methodology established for a generic normal distribution \( p(x_1, x_2) \) is adapted to estimate variance in the posterior distribution \( \pi(\cdot|Y) \), effectively applying the same theoretical concepts to the Bayesian analysis under a Laplace approximation.

\section{Supplementary Results in X Data Analysis}\label{S:StatisticalAnalysis}
\subsection{BIC values}
In our analysis, we utilized the Bayesian Information Criterion (BIC) to determine the best change-point among the competing models. For the X platform dataset, we evaluated the BIC values for potential single change-points from 2012 onwards, as presented in Table below. The model suggesting a change-point in 2012 yielded the lowest BIC value.
\begin{table}[H]
    \centering
    \begin{tabular}{*{9}{c}}

         &2012 &2013&2014&2015&2016&2017&2018&2019 \\
         \hline
         
       BIC  &\textbf{220138}&220163&220201&220189&220212&220203&220234&220224\\
       \hline
    \end{tabular}
    \caption{BIC values for competing models with different change-point for the X platform data.}
    \label{tab:my_label}
\end{table}

\subsection{Diagnostics}
Note that the variance estimation algorithm is similar to a quadratic approximation to the log-posterior density. We performed diagnostics to evaluate the assumption that the log likelihood function can be approximated by a quadratic function in a small neighborhood centered around the mode.
\subsubsection{Normality test}
We rewrite equation \eqref{Eq:VarianceFormula1} as follows:
\begin{align}
       \max_{\theta_1 = \theta_1^{*}-\eta} \log \pi(Z,\theta|Y) =  \max_{\theta_1 = \theta_1^{*}} \log \pi(Z,\theta|Y) -  \frac{1}{2}\frac{\eta^2}{
         \widehat{\text{Var}(\theta_1|Y)}},
\end{align}
and if our assumptions are correct, then $\widehat{\text{Var}(\theta_1|Y)}$ should be a constant no matter the choice of $\eta$.  We vary $\eta$ to obtain a plot of $\max_{\theta_1 = \theta_1^{*}-\eta} \log \pi(Z,\theta|Y)$ as a function of $\eta$ to test whether this assumption is valid. If this assumption holds, $\max_{\theta_1 = \theta_1^{*}-\eta} \log \pi(Z,\theta|Y)$ should be approximately a quadratic function as a function of $\eta$.

We choose $\theta_1$ to be $\gamma^{w}_R$, and set $\eta$ to be {-0.03, -0.02, -0.01, 0.01, 0.02, 0.03}. In Figure \ref{test1} we plot $\max_{\theta_1 = \theta_1^{*}-\eta} \log \pi(Z,\theta|Y)$ as a function of $\eta$ (left) and $\eta^2$ (right). We can see that $\max_{\theta_1 = \theta_1^{*}-\eta} \log \pi(Z,\theta|Y)$ is approximately quadratic as a function of $\eta$ and linear as a function of $\eta^2$. The diagnostic plot does not show evidence of violating the underlying assumptions of the variance estimation method.
\begin{figure}[h]
\centering
\includegraphics[width=\textwidth]{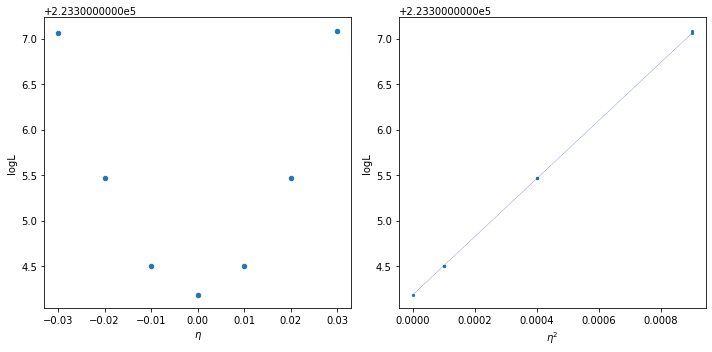}
\caption{We choose $\theta_1$ to be $\gamma^{w}_R$, and set $\eta$ to be {-0.03, -0.02, -0.01, 0.01, 0.02, 0.03}. We plot $\max_{\theta_1 = \theta_1^{*}-\eta} \log \pi(Z,\theta|Y)$ as a function of $\eta$ (left) and $\eta^2$ (right).
}
\label{test1}
\end{figure}

\subsubsection{Linearity test}
\begin{figure}[h]
\centering
\includegraphics[width=\textwidth]{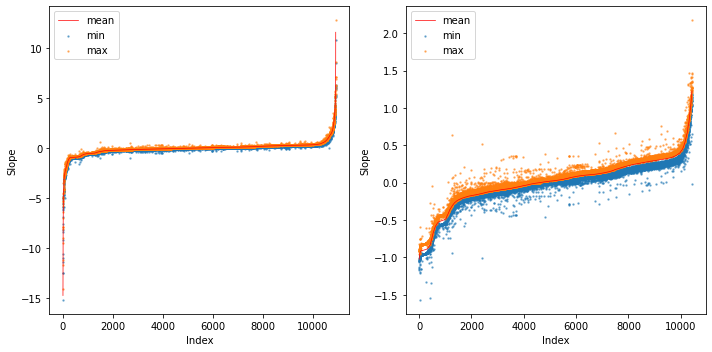}
\caption{We choose $\theta_1$ to be $\gamma^{w}_R$, and set $\eta$ to be {-0.03, -0.02, -0.01, 0.01, 0.02, 0.03}. We calculate $\frac{\mu(X_j|X_1=\mu_{1}+\eta)-\mu_j}{\eta}$ for each latent position parameters under each $\eta$. The latent position parameters are ordered by the mean of its slopes and plot the mean maximum and minimum of its slopes (left) and the the slopes of 95\% of the latent position parameters after removing the tail 5\% that have a very large or very small slopes. 
}
\label{test2}
\end{figure}
From Theorem \ref{main_theo}, we see that the conditional mean of  $X_{2:n}|X_1$ is linear as a function of $X_1$ under the Gaussian assumption, i.e.,
\begin{align}
    \mu(X_{2:n}|X_1=\mu_{1}+\eta) = \mu_{2:n}+\Sigma_{21}\Sigma_{11}^{-1}\eta.
\end{align}

We vary $\eta$ to obtain the slope $\frac{\mu(X_j|X_1=\mu_{1}+\eta)-\mu_j}{\eta}$ for each $X_j$ with different $\eta$ to test whether this assumption is valid. If this assumption holds, $\frac{\mu(X_j|X_1=\mu_{1}+\eta)-\mu_j}{\eta}$ should be a constant regardless the choice of $\eta$ for each $X_j$. 

We use the results of the previous test, set $\theta_1$ to be $\gamma^{w}_R$, and $\eta$ to take values in the set $\{-0.03, -0.02, -0.01, 0.01, 0.02, 0.03\}$. We calculate $\frac{\mu(X_j|X_1=\mu_{1}+\eta)-\mu_j}{\eta}$ for each latent position parameters and each $\eta$. In Figure \ref{test2} we order the latent position parameters by the mean of its slopes and plot the mean maximum and minimum of its slopes (left) and the the slopes of 95\% of the latent position parameters after removing the tail 5\% that have a very large or very small slopes.  We can see that the slopes, or the correlated changes for each latent position parameters when we change $\gamma^w_R$ are close under different choice of $\eta$. Therefore the diagnostic plot does not show substantial evidence of violating the underlying assumptions of the variance estimation method.
\section{Implementation Details}\label{S:ImplementationDetails}
We use gradient descent with momentum and choose different learning hyperparameters for the latent position parameters and the global parameters. With gradient descent, we use all the terms in the log posterior function instead of taking a sampled posterior function. We stop running gradient descent when the parameter updates drop below a threshold of $\epsilon = 10^{-4}$, a condition we check every 100 gradient steps. Priors for $\alpha$ and $\delta$ were chosen to be $\mathcal{N}(0, 100)$ to keep it flat and uninformative. We chose the priors for $\gamma^{w}$ and $\gamma^b$ to be $\mathcal{N}(0.5, 100)$ and $\mathcal{N}(-0.5, 100)$ to reflect the prior belief of polarization, however these are also quite uninformative given the large variance. We fix $\tau^2$ at 10, $\sigma^2$ at 1 and $\phi^2$ at 10.
We choose \( p = 2 \), following \cite{10.1093/jrsssa/qnad008}. The choice of two dimensions aligns with DW-NOMINATE, a widely recognized model of congressional ideology. This model demonstrates that two dimensions can account for up to 90\% of the variation in roll call voting \citep{mccarty2016polarized}.

For the step size $\lambda$, we started by conducting a search to identify the optimal initial step size for our model by observing the loss curves for various candidates. The goal was to identify a step size that avoided slow convergence and prevented the loss from exploding. Once the best initial step size was determined, we proceeded with training the model. During training, if the loss plateaued, we halved the step size to promote further reduction in the loss. After one such adjustment, further halving did not yield significant improvements, so we concluded the training process at that point.
To choose \(\eta_{\alpha}\), we take \(\alpha\) as an example. We start by selecting an initial guess for \(\eta_{\alpha}\) such that \(\log \pi(\alpha^{*}) - \log \pi(\alpha^{*} + \eta_{\alpha})\) is neither too large nor too small, aiming for a value between 10 and 50. Here, \(\log \pi(\alpha^{*})\) denotes the log probability density function \(\pi\) evaluated at the mode with $\alpha$ set to be \(\alpha^{*}\), while \(\log \pi(\alpha^{*} + \eta_{\alpha})\) represents the same function evaluated at the mode with $\alpha$ set to be \(\alpha^{*}+\eta_{\alpha}\). This range ensures that the perturbation is sizable enough to influence the parameters meaningfully but not so large that they deviate excessively from the optimum. In our case, we opted for a uniform \(\eta_{\alpha}\) for all parameters for which we wanted to estimate variance, and this approach worked well.
\section{Algorithm Comparison and Analysis}\label{S:AlgorithmComparison}
\begin{figure}[h]
    \centering
    \includegraphics[width=0.8\textwidth]{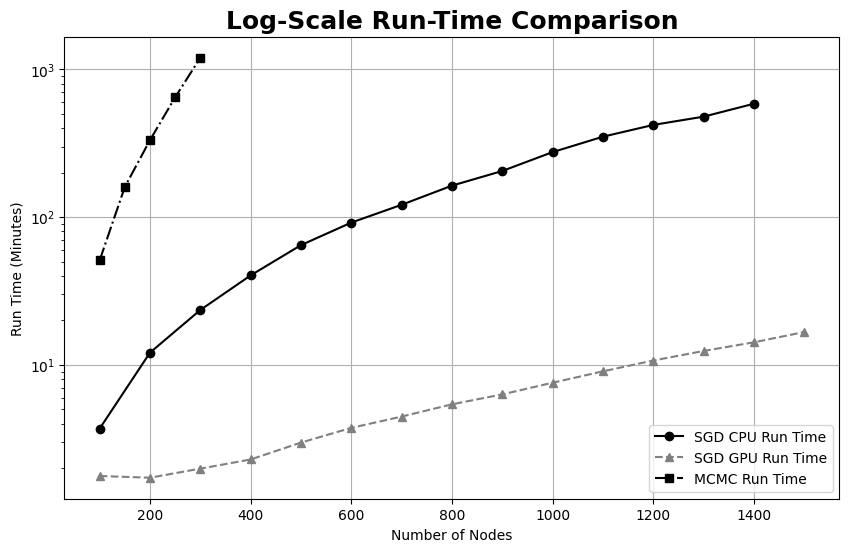}
    \caption{Comparison of Convergence Times for SGD (CPU and GPU) and MCMC Algorithms Across Different Network Sizes.}
    \label{fig:runtime_comparison}
\end{figure}

Figure \ref{fig:runtime_comparison} illustrates the convergence times for SGD-based algorithms executed on a CPU and an Nvidia GPU A40, alongside the convergence times for the MCMC algorithm. The x-axis denotes the number of nodes in the network, while the y-axis represents the convergence time in minutes.
The blue curve corresponds to the convergence times of the SGD algorithm when executed on a CPU, the red curve represents the convergence times of the same algorithm when executed on a GPU, and the green curve shows the convergence times for the MCMC algorithm.
The SGD algorithm provides approximately a 30-fold reduction in convergence time compared to the MCMC algorithm. Additionally, parallelization further improves the scalability of the SGD method: executing the SGD algorithm on the Nvidia GPU A40 yields substantial efficiency gains, resulting in around a 20-fold reduction in convergence time compared to its execution on the CPU.
Note that the reported convergence times for the SGD algorithm include both the point estimation step and the variance estimation step.
\subsection{Impact of Hyperparameters on Convergence Time}
It is important to understand how hyperparameters, such as step size (\(\lambda\)), affect the convergence speed. Choosing the correct step size (\(\lambda\)) and momentum in SGD is crucial, as it significantly affects the convergence behavior. Specifically, if the step size is too large, the algorithm may diverge, causing the loss to explode. If the step size is too small, the algorithm will converge very slowly. To address this, we fixed the momentum parameter at 0.99 and focused on tuning the step size.
For the step size \(\lambda\), we started by conducting an empirical search to identify an optimal initial value. We observed the loss curves for various step size candidates, aiming to find a step size that avoids both slow convergence and loss explosion. After determining the best initial step size, we began training the model. During training, if we noticed that the loss plateaued, we halved the step size. This halving process was repeated until the log posterior function convergences.
Once the correct magnitude is established, fine-tuning the step size only results in marginal gains in convergence speed. For the computational time plot presented above, we used the step size of $0.001$ for all latent position parameters and $0.001$ for all global parameters, while using the sign of the gradient (+1/-1) for updating the global parameters instead of the actual gradient values.
\subsection{Computational Complexity and Factors Contributing to Speedup}
Each iteration of the proposed SGD-based algorithm has a computational complexity of \(O(T \times N^2)\) for both time and space, where \(T\) is the number of time points in the network time-series and \(N\) represents the number of nodes in each network. This complexity comes from the requirement to process all node pairs across each time point.
Each iteration of the MCMC algorithm also has the same computational complexity of \(O(T \times N^2)\). The significant speedup of the SGD-based algorithm predominantly stems from two key factors:
\begin{enumerate}
    \item \textbf{Parallelization on GPU}: The ability to leverage the parallel processing capabilities of GPUs significantly reduces the time spent per iteration. This factor provides around a 20-fold increase in performance compared to the CPU execution.
    \item \textbf{Fewer Iterations to Converge}: The SGD-based algorithm requires far fewer iterations to achieve convergence compared to the MCMC algorithm. This results in approximately a 30-fold increase in performance.
\end{enumerate}

\section{Varying percentage leaving}\label{S:VaryingPercentageLeaving}

\begin{table}[H]
    \centering
    \resizebox{\columnwidth}{!}{
\begin{tabular}{ cccccccccc }
&\multicolumn{2}{c}{$\alpha=1$} & \multicolumn{2}{c}{$\delta=2$} & \multicolumn{2}{c}{$\gamma^w=0.25$} & \multicolumn{2}{c}{$\gamma^b=0.5$} \\
\hline 
&$\hat{\alpha}$ & $\widehat{\text{Var}(\alpha)}$ & $\hat{\delta}$ & $\widehat{\text{Var}(\delta)}$ & $\hat{\gamma^w}$ & $\widehat{\text{Var}(\gamma^w)}$ & $\hat{\gamma^b}$ & $\widehat{\text{Var}(\gamma^b)}$ \\
\hline
Turnover=0\%&0.94 (0.004) & 0.005 ($<$0.001) & 1.98 (0.005) & 0.005 ($<$0.001) & 0.24 (0.025) & 0.028 (0.007) & 0.50 (0.025) & 0.026 (0.008) \\
\hline
Turnover=10\%&0.94 (0.006) & 0.005 ($<$0.001) & 1.98 (0.007) & 0.005 ($<$0.001) & 0.24 (0.032) & 0.030 (0.007) & 0.50 (0.030) & 0.027 (0.008) \\
\hline
Turnover=20\%&0.94 (0.006) & 0.005 ($<$0.001) & 1.98 (0.007) & 0.006 ($<$0.001) & 0.25 (0.041) & 0.032 (0.007) & 0.49 (0.034) & 0.028 (0.008) \\
\hline
Turnover=30\%&0.94 (0.008) & 0.005 ($<$0.001) & 1.98 (0.005) & 0.007 ($<$0.001) & 0.24 (0.044) & 0.034 (0.008) & 0.50 (0.034) & 0.030 (0.010) \\
\hline
Turnover=40\%&0.94 (0.007) & 0.005 ($<$0.001) & 1.98 (0.008) & 0.008 ($<$0.001) & 0.26 (0.042) & 0.036 (0.008) & 0.48 (0.029) & 0.030 (0.009) \\
\hline
Turnover=50\%&0.94 (0.008) & 0.005 ($<$0.001) & 1.98 (0.008) & 0.009 ($<$0.001) & 0.25 (0.042) & 0.038 (0.008) & 0.49 (0.031) & 0.032 (0.010) \\
\hline
Turnover=60\%&0.94 (0.007) & 0.005 ($<$0.001) & 1.98 (0.010) & 0.011 ($<$0.001) & 0.27 (0.036) & 0.039 (0.008) & 0.48 (0.027) & 0.031 (0.009) \\
\hline
Turnover=70\%&0.94 (0.007) & 0.005 ($<$0.001) & 1.98 (0.014) & 0.015 ($<$0.001) & 0.26 (0.030) & 0.043 (0.008) & 0.48 (0.025) & 0.033 (0.011) \\
\hline
Turnover=80\%&0.95 (0.005) & 0.005 ($<$0.001) & 1.99 (0.024) & 0.022 ($<$0.001) & 0.26 (0.056) & 0.049 (0.008) & 0.49 (0.038) & 0.034 (0.012) \\
\hline
\end{tabular}}
        \caption{Posterior-based mean (empirical standard deviation) for point estimation and variance estimation for parameters $\alpha$, $\delta$, $\gamma^w$, and $\gamma^b$, based on $n = 500$ nodes with $T = 10$ time points. The table presents the robustness of the estimation method across varying proportions of node turnover, ranging from 0\% to 80\%. Despite increasing turnover, the point estimates remain consistent. The variance of the estimates for $\delta$, $\gamma^w$, and $\gamma^b$ increases due to the reduction in sample size, which is also captured by the variance estimates appropriately.}
    \label{tab:robust}
\end{table}

Table \ref{tab:robust} presents the posterior-based mean (empirical standard deviation) for point estimation and variance estimation, based on a CLSNA model with $n = 500$ nodes, similar to the size of a US Congress, across varying proportions of nodes entering and leaving the network to demonstrate the robustness of the method. Similar to Congress, at each time step, a random subset of nodes are dropped out, while an equal number of new nodes are randomly initialized according to the model's prior distribution. The results indicate that both the point estimates and variance estimates are robust across different turnover levels. As the turnover proportion increases, the variance of the estimates for $\delta$, $\gamma^w$, and $\gamma^b$ also increases, which can be attributed to the smaller sample sizes available for estimating these parameters. This shows the method's capability to provide robust point estimates with varying proportions of nodes entering and leaving while effectively capturing the increase in uncertainty as the turnover proportion rises.
\end{document}